\newcommand{\esolver}{\textsc{ESolver}\xspace}
\newcommand{\cvc}{\textsc{cvc}{\small 4}\xspace}
\newcommand{\teq}{\approx}
\newcommand{\M}{\mathcal{M}}
\newcommand{\rcon}{\ensuremath{\mathrm{rcon}}}
\newcommand{\define}[1]{\textsl{#1}}
\newcommand{\rem}[1]{\textcolor{magenta}{[#1]}}
\newcommand{\ct}[1]{\rem{#1 --ct}}
\newcommand{\sorts}{\mathbf{S}}
\newcommand{\vars}{\mathbf{X}}
\newcommand{\I}{\mathcal{I}}
\newcommand{\mods}{\mathbf{I}}
\newcommand{\lan}{\mathbf{L}}
\newcommand{\props}{\mathbf{P}}
\newcommand{\ent}[1][]{\models_{#1}}
\newcommand{\tent}{\ent[T]}
\newcommand{\ssorts}[1]{#1^\mathrm{s}}
\newcommand{\sfuns}[1]{#1^\mathrm{f}}
\newcommand{\con}[1]{\mathsf{#1}}
\newcommand{\Bool}{\con{Bool}}
\newcommand{\Int}{\con{Int}}
\newcommand{\ite}{\con{ite}}
\newcommand{\ev}{\con{ev}}
\newcommand{\size}{\con{size}}
\newcommand{\TD}{T_\mathrm{D}}
\newcommand{\ltrue}{\top}
\newcommand{\lfalse}{\bot}
\newcommand{\nmf}[1]{#1\!\downarrow}
\begin{document}
\sloppy

\mainmatter  % start of an individual contribution

% first the title is needed
\newcommand{\mytitle}{On Counterexample Guided Quantifier Instantiation \\
for Synthesis in CVC4\thanks{
This work is supported in part by the European
    Research Council (ERC) Project \emph{Implicit Programming} and Swiss National Science Foundation Grant \emph{Constraint Solving Infrastructure for Program Analysis}.}
\thanks{
This paper is dedicated to the memory of Morgan Deters
who died unexpectedly in Jan 2015.
}}

\title{\mytitle}

% a short form should be given in case it is too long for the running head
\titlerunning{\mytitle}

% the name(s) of the author(s) follow(s) next
%
% NB: Chinese authors should write their first names(s) in front of
% their surnames. This ensures that the names appear correctly in
% the running heads and the author index.
% If you are not Chinese, who cares anyways.
%
\author{Andrew Reynolds\inst{1} \and 
Morgan Deters\inst{2} \and \\
Viktor Kuncak\inst{1} \and
Cesare Tinelli\inst{3} \and 
Clark Barrett\inst{2} }

    \renewcommand{\topfraction}{0.95}    % max fraction of floats at top
    \renewcommand{\bottomfraction}{0.95} % max fraction of floats at bottom
    %   Parameters for TEXT pages (not float pages):
    \setcounter{topnumber}{2}
    \setcounter{bottomnumber}{2}
    \setcounter{totalnumber}{4}     % 2 may work better
    \setcounter{dbltopnumber}{2}    % for 2-column pages
    \renewcommand{\dbltopfraction}{0.95} % fit big float above 2-col. text
    \renewcommand{\textfraction}{0.07}  % allow minimal text w. figs
    %   Parameters for FLOAT pages (not text pages):
    \renewcommand{\floatpagefraction}{0.7}      % require fuller float pages
        % N.B.: floatpagefraction MUST be less than topfraction !!
    \renewcommand{\dblfloatpagefraction}{0.7}   % require fuller float pages    

\institute{
{\'E}cole Polytechnique F{\'e}d{\'e}rale de Lausanne (EPFL), Switzerland
\and 
Department of Computer Science, New York University 
\and 
Department of Computer Science, The University of Iowa 
}

%\toctitle{Lecture Notes in Computer Science}
%\tocauthor{Authors' Instructions}
\maketitle

%at least 70 and at most 150 words
\begin{abstract}
We introduce the first program
synthesis engine implemented inside an SMT solver. 
%We formulate our technique as support for \emph{synthesis conjectures}.  
We present an approach that extracts solution functions from unsatisfiability 
proofs of the negated form of synthesis conjectures. 
We also discuss novel counterexample-guided techniques for quantifier instantiation
that we use to make finding such proofs practically feasible.
A particularly important class of specifications are single-invocation properties,
for which we present a dedicated algorithm.
To support syntax restrictions on generated
solutions, our approach can transform a solution
found without restrictions into the desired syntactic form.
As an alternative, we show how to 
use evaluation function axioms to embed syntactic restrictions into 
constraints over algebraic datatypes, and then use
an algebraic datatype decision procedure to drive synthesis. 
Our experimental evaluation on syntax-guided synthesis benchmarks shows 
that our implementation in the CVC4 SMT solver is 
competitive with state-of-the-art tools for synthesis.
\end{abstract}

\section{Introduction}

The synthesis of functions that meet a given specification is a long-standing fundamental
goal that has received great attention recently. This functionality directly applies
to the synthesis of functional programs \cite{KuncakETAL13FunctionalSynthesisLinearArithmeticSets,KuncakETAL12SoftwareSynthesisProcedures} but also translates to imperative programs through techniques that include bounding input space, verification condition generation, and invariant discovery \cite{SolarLezama13ProgramSketching,SolarLezamaETAL06CombinatorialSketchingFinitePrograms,
SrivastavaGulwaniFoster13TemplatebasedProgramVerificationProgramSynthesis}. 
Function synthesis is also an important subtask in the synthesis of protocols and reactive systems, especially when these systems are infinite-state \cite{AlurETAL14SynthesizingFinitestateProtocolsFromScenariosRequirements, RyzhykETAL14UserguidedDeviceDriverSynthesis}.
The SyGuS format and competition \cite{AlurETAL13SyntaxguidedSynthesis,
  AlurETAL2014SyGuSMarktoberdorf,  DBLP:journals/corr/RaghothamanU14}
   inspired by the success of the SMT-LIB and
SMT-COMP efforts \cite{BarrettETAL136YearsSmtcomp},
has significantly improved and simplified the process of rigorously comparing different solvers on synthesis problems.

Connection between synthesis and theorem proving was established already in
early work on the subject~\cite{MannaWaldinger80DeductiveApproachToProgramSynthesis,
Green69ApplicationTheoremProvingToProblemSolving}.
It is notable that early research \cite{MannaWaldinger80DeductiveApproachToProgramSynthesis} found that the capabilities of theorem provers were the main bottleneck for synthesis.
Taking lessons from automated software verification, 
recent work on synthesis has
made use of advances in theorem proving, particularly in SAT and SMT solvers. However, that work
avoids formulating the overall synthesis task as a theorem proving problem directly. Instead, 
existing work typically builds custom loops outside of an SMT or SAT solver, often using numerous variants of counterexample-guided synthesis. A typical role of the SMT
solver has been to validate candidate solutions and provide counterexamples
that guide subsequent search, although approaches such as symbolic term
exploration \cite{KneussETAL13SynthesisModuloRecursiveFunctions} also use an
SMT solver to explore a representation of the space of solutions. In existing
approaches, SMT solvers thus receive a large number of separate queries, with
limited communication between these different steps.

\sparagraph{Contributions.}
In this paper, we revisit the formulation of the overall synthesis task as a theorem proving problem.
We observe that SMT solvers already have some of the key functionality for synthesis; we 
show how to improve existing algorithms and introduce new ones to make SMT-based synthesis competitive. 
%The resulting implementation outperforms the state of the art substantially on important classes of synthesis problems. 
%Our specific contributions are the following.
Specifically, we do the following.
\begin{itemize}
\item We show how to formulate an important class of synthesis problems as the problem of disproving universally quantified formulas, and how to synthesize functions automatically from selected instances of these formulas.
\item We present counterexample-guided techniques for quantifier instantiation, which are crucial to obtain competitive performance on synthesis tasks.
\item We discuss techniques to simplify the synthesized functions, to help ensure that they are small and adhere to specified syntactic requirements.
\item We show how to encode syntactic restrictions using theories of algebraic datatypes and axiomatizable evaluation functions. 

%This results in a flexible procedure that has desirable theoretical properties
%and also shows promise in practice.
\item We show that for an important class of single-invocation properties, the synthesis of functions from relations, the implementation of our approach in CVC4 significantly outperforms leading tools from the SyGuS competition.
\end{itemize}

% If this gets accepted, we should cite Andy's TR 

\sparagraph{Preliminaries.} %\label{sec:prelim}
Since synthesis involves finding (and so proving the existence) of functions,
we use notions from many-sorted \emph{second-order} logic to define the general problem.
We fix a set $\sorts$ of \define{sort symbols} and 
an (infix) equality predicate $\teq$ of type 
$\sigma \times \sigma$ for each $\sigma \in \sorts$.
%, which we always interpret as the identity relation over 
%(the set denoted by) $\sigma$.
For every non-empty sort sequence $\vec \sigma \in \sorts^+$
with $\vec \sigma = \sigma_1 \cdots \sigma_n\sigma$,
we fix an infinite set $\vars_{\vec \sigma}$
of \define{variables $x^{\sigma_1 \cdots \sigma_n \sigma}$ 
of type $\sigma_1 \times \cdots \times \sigma_n \to \sigma$}.
%\footnote{
%Note that we do not consider types or order greater than the second,
%i.e., of the form
%$\tau_1 \times \cdots \times \tau_n \to \tau_{n+1}$ where each $\tau_i$ 
%can be itself a type.
%}
For each sort $\sigma$ we identity the type $() \to \sigma$ with $\sigma$ and
call it a \define{first-order type}.
We assume the sets $\vars_{\vec \sigma}$ are pairwise disjoint and 
let $\vars$ be their union.
A \define{signature} $\Sigma$ consists of 
a set $\ssorts{\Sigma} \subseteq \sorts$ of sort symbols and
%a set $\preds{\Sigma}$ of \define{(sorted) predicate symbols} $p^{S_1 \cdots S_n}$,
a set $\sfuns{\Sigma}$ of 
\define{function symbols $f^{\sigma_1 \cdots \sigma_n \sigma}$ 
of type $\sigma_1 \times \cdots \times \sigma_n \to \sigma$},
where $n \geq 0$ and $\sigma_1, \ldots, \sigma_n, \sigma \in \ssorts{\Sigma}$.
%When $n$ above is 0, we call $f$ a \define{constant symbol}.
We drop the sort superscript from variables or function symbols
when it is clear from context or unimportant.
We assume that signatures always include a Boolean sort $\Bool$ and constants 
$\ltrue$ and $\lfalse$ of type $\Bool$ (respectively, for true and false).
Given a many-sorted signature $\Sigma$ 
together with quantifiers and lambda abstraction,
the notion of well-sorted ($\Sigma$-)term, atom, literal, clause, and formula
with variables in $\vars$ are defined as usual in second-order logic.
%they are referred to respectively as \define{$\Sigma$-terms},  
%\define{$\Sigma$-atoms} and so on.
All atoms have the form $s \teq t$.
%with $s$ and $t$ of the same sort.
Having $\teq$ as the only predicate symbol causes no loss of generality
since we can model other predicate symbols as function symbols 
with return sort $\Bool$.
We will, however, write just $t$ in place of the atom $t \teq \ltrue$, 
to simplify the notation. 
A $\Sigma$-term/formula is \define{ground} if it has no variables,
it is \define{first-order} if it has only \define{first-order variables},
that is, variables of first-order type.
When $\vec{x} = (x_1,\ldots,x_n)$ is a tuple of variables and 
$Q$ is either $\forall$ or $\exists$,
we write $Q \vec x\, \varphi$ as an abbreviation of
$Q x_1 \cdots Q x_n\, \varphi$.
%
%Free and bound occurrences of a variable in a formula are also defined as usual.
%A \define{($\Sigma$-)sentence} is a $\Sigma$-formula with no free variables.
%We denote by $\vars(\varphi)$ 
%the set of variables occurring free in the formula $\varphi$.
%, and extend the notation to sets of formulas in the obvious way.
If $e$ is a $\Sigma$-term or formula and 
$\vec{x} = (x_1,\ldots,x_n)$ has no repeated variables,
we write %$\varphi[x_1,\ldots,x_n]$, or just $\varphi[\vec x]$, 
$e[\vec x]$
to denote that all of $e$'s free variables are from $\vec x$;
%if $t_i$ is a term of the same sort as $x_i$
%for $i=1,\ldots, n$, we write $\varphi[t_1,\ldots,t_n]$ 
if $\vec{t} = (t_1,\ldots,t_n)$ is a term tuple, 
we write $e[\vec t]$ for the term or formula obtained from $e$ 
by simultaneously replacing, for all $i=1,\ldots,n$, every occurrence 
of $x_i$ in $e$ by $t_i$.
%we write $\vec s \teq \vec t$ for the set
%$\{s_1 \teq t_1, \ldots, s_n \teq\ t_n\}$. 
%We denote finite tuples of elements by letters in bold font,
%and use comma (,) for tuple concatenation.
%
A \define{$\Sigma$-interpretation $\I$} %is a mathematical structure that 
maps:
each $\sigma \in \ssorts{\Sigma}$ to a non-empty set $\sigma^\I$,
the \define{domain} of $\sigma$ in $\I$, with $\Bool^\I = \{\ltrue, \lfalse\}$;
%%\footnote{
%%We will use the calligraphic letters $\I$, $\mathcal{B}$, $\ldots$ 
%%to denote interpretations, 
%%and the corresponding subscripted Roman letters 
%%$\I_S$, $B_S$, $\ldots$ to denote sort domains.
%%}
each
$u^{\sigma_1 \cdots \sigma_n\sigma} \in \vars \cup \sfuns{\Sigma}$ 
to a total function 
$u^\I : \sigma_1^\I \times \cdots \times \sigma_n^\I \rightarrow \sigma^\I$
when $n > 0$ and
to an element of $\sigma^\I$ when $n = 0$.
The interpretation $\I$ induces as usual a mapping from terms $t$ 
of sort $\sigma$ to elements $t^\I$ of $\sigma^\I$.
If $x_1, \ldots,x_n$ are variables and $v_1,\ldots,v_n$ are well-typed values 
for them, we denote by $\I[x_1 \mapsto v_1, \ldots, x_n \mapsto v_n]$
the $\Sigma$-interpretation that maps each $x_i$ to $v_i$ and is otherwise 
identical to $\I$.
A satisfiability relation between $\Sigma$-interpretations and 
$\Sigma$-formulas is defined inductively as usual.

A \define{theory} is a pair $T = (\Sigma, \mods)$ where 
$\Sigma$ is a signature and  $\mods$ is a non-empty class of $\Sigma$-interpretations,
the \define{models} of $T$,
that is closed under variable reassignment
(i.e., every $\Sigma$-interpretation that differs from one in $\mods$
only in how it interprets the variables is also in $\mods$) and isomorphism.
%$\I[x \mapsto a] \in \mods$ 
%for all $\I \in \mods$,
%all variables $x$ of sort $S$ and all $a \in \I_S$,
%where $\I[x \mapsto a]$ is the $\Sigma$-interpretation 
%that maps $x$ to $a$ and is otherwise identical to $\I$.
%An \define{$\lo$-formula} is an element of $\Fo$ and
%an \define{$\lo$-interpretation} is an element of $\mods$.
A $\Sigma$-formula $\varphi[\vec x]$ is 
\define{$T$-satisfiable} (resp., \define{$T$-unsatisfiable}) 
if it is satisfied by some (resp., no) interpretation in $\mods$.
A satisfying interpretation for $\varphi$ \define{models (or is a model of)} $\varphi$.
%
%A set $\Gamma$ of formulas \define{$T$-entails} a $\Sigma$-formula $\varphi$,
%written $\Gamma \tent \varphi$,
%if every interpretation in $\mods$ that satisfies all formulas in $\Gamma$
%satisfies $\varphi$ as well.
%The formula $\varphi$ is \define{valid} in $T$, written $\ent[T] \varphi$
%if $\emptyset \ent[T] \varphi$.
%The set $\Gamma$ is \define{$T$-satisfiable} if $\Gamma \not\tent \lfalse$.
%
%\ct{add definition of uninterpreted symbol}
A formula $\varphi$ is \define{$T$-valid}, written $\tent \varphi$,
if every model of $T$ is a model of $\varphi$.
Given a fragment $\lan$ of the language of $\Sigma$-formulas,
a $\Sigma$-theory $T$ is \define{satisfaction complete with respect to $\lan$}
if every $T$-satisfiable formula of $\lan$ is $T$-valid.
In this paper we will consider only theories that are satisfaction complete
wrt the formulas we are interested in.
Most theories used in SMT (in particular, all theories of a specific structure
such various theories of the integers, reals, strings, algebraic datatypes, 
bit vectors, and so on) are satisfaction complete with respect 
to the class of closed first-order $\Sigma$-formulas.
Other theories, such as the theory of arrays, are satisfaction complete only
with respect to considerably more restricted classes of formulas.
\section{Synthesis inside an SMT Solver}
%===============================================================================

We are interested in synthesizing computable functions automatically 
from formal logical specifications stating properties of these functions.
As we show later, under the right conditions, we can formulate a version 
of the synthesis problem in \emph{first-order logic} alone, 
%ct which is the starting point for allowing 
which allows us to tackle the problem using SMT solvers.

%\ct{Some of the content in the following, up to the first subsection, may 
%end up in the introduction.}

%\subsection{Synthesis Conjectures}

We consider the synthesis problem in the context of some theory $T$ of signature $\Sigma$
that allows us to provide the function's specification as a $\Sigma$-formula.
Specifically, we consider \define{synthesis conjectures} expressed as 
(well-sorted) formulas of the form
\begin{eqnarray} \label{eqn:syn_conj}
\exists f^{\sigma_1 \cdots\sigma_n\sigma} \:
\forall x_1^{\sigma_1} \: \cdots \: \forall x_n^{\sigma_n} \:
  P[f, x_1, \ldots, x_n]
\end{eqnarray}
or $\exists f\, \forall \vec x\, P[f, \vec x]$, for short,
where the second-order variable $f$ represents 
the function to be synthesized and $P$ is a $\Sigma$-formula encoding
properties that $f$ must satisfy for all possible values of the input tuple
$\vec x = (x_1,\ldots,x_n)$.
%We will say that the synthesis conjecture is \define{realizable} 
%if $(1)$ is $T$-satisfiable.
%
In this setting, finding a witness for this satisfiability problem amounts 
to finding a function of type
$\sigma_1 \times \cdots \times \sigma_n \to \sigma$ in some model of $T$
that satisfies $\forall \vec x\, P[f, \vec x]$.
%In general, the set of solutions to such problems is not even recursively enumerable.
Since we are interested in automatic synthesis, we the restrict ourselves here 
to methods that search over a subspace $S$ of solutions representable 
syntactically as $\Sigma$-terms. 
We will say then that a synthesis conjecture is \define{solvable} 
if it has a syntactic solution in $S$.

In this paper we present two approaches that work 
with classes $\lan$ of synthesis conjectures and $\Sigma$-theories $T$ 
that are satisfaction complete wrt $\lan$.
In both approaches, we solve a synthesis conjecture
$\exists f\, \forall \vec x\, P[f, \vec x]$
by relying on quantifier-instantiation techniques % for SMT 
to produce 
a first-order $\Sigma$-term $t[\vec x]$ of sort $\sigma$ such that 
$\forall \vec x\, P[t, \vec x]$ is $T$-satisfiable.
When this $t$ is found, the synthesized function is denoted
by $\lambda \vec x.\, t$\,.

In principle, to determine the satisfiability 
of $\exists f\, \forall \vec x\, P[f, \vec x]$
an SMT solver supporting the theory $T$ can consider the satisfiability 
of the (open) formula $\forall \vec x\, P[f, \vec x]$ by treating $f$ 
as an uninterpreted function symbol.
This sort of Skolemization is not usually a problem for SMT solvers
as many of them can process formulas with uninterpreted symbols.
The real challenge is the universal quantification over $\vec x$
because it requires the solver to construct internally (a finite representation of)
an interpretation of $f$ that is guaranteed to satisfy $P[f, \vec x]$
for every possible value of $\vec x$~\cite{GeDeM-CAV-09,ReyEtAl-CADE-13}.

%First, the solver must construct a stream of candidate interpretations for $f$ based on its partial model, 
%which by default gives no guarentee that an interpretation will eventually be discovered that satisifies this (or other) quantified formulas.
%Moreover, the solver must be extended with methods for determining when universally quantified formulas are satisfied.
%In fact, showing that a universally quantified formula is satisfied for all $\vec i$ often is accomplished by showing that 
%its negation under the candidate interpretation of $f$ is unsatisfiable~\cite{GeDeM-CAV-09}, which itself reduces to a ground satisfiability query.
%An alternative line of research in the domain of software verification
%has explored specialized techniques for establishing the satisfiability of quantified horn clauses~\cite{beyene2013solving, beyene2014constraint, DBLP:conf/sas/BjornerMR13} %~\cite{DBLP:conf/sas/BjornerMR13}, 
%which has had success for handling clauses in several theories.

%More traditionally, SMT solvers have focused on determining the unsatisfiability of inputs containing quantified formulas
%by adding instances of quantified formulas to their set of clauses until a refutation is found at the ground level.
%While these techniques are generally incomplete, 
%SMT solvers are often quite effective tools for doing so~\cite{DBLP:conf/cade/MouraB07, reynolds14quant_fmcad}~\cite{DBLP:conf/cade/MouraB07, reynolds14quant_fmcad}.

More traditional SMT solver designs to handle universally quantified formulas 
have focused on instantiation-based methods to show \emph{un}satisfiability.
They generate ground instances of those formulas until a refutation is found 
at the ground level~\cite{Detlefs03simplify:a}.
While these techniques are incomplete in general,
they have been shown to be quite effective in practice~\cite{MouraBjoerner07EfficientEmatchingSmtSolvers, reynolds14quant_fmcad}.
%Arguably, doing so is more natural for an SMT solver and poses fewer complications than establishing the satisfiability of quantified formulas.
For this reason, we advocate approaches to synthesis geared toward establishing 
the \emph{unsatisfiability of the negation} of the synthesis conjecture:
\begin{eqnarray} \label{eqn:neg_syn_conj}
  \forall f\,\exists \vec x\, \lnot P[f, \vec x]
\end{eqnarray}
Thanks to our restriction to satisfaction complete theories, 
(\ref{eqn:neg_syn_conj}) is $T$-unsatisfiable exactly when 
the original synthesis conjecture (\ref{eqn:syn_conj}) is $T$-satisfiable.\footnote{
Other approaches in the verification and synthesis literature 
also rely implicitly, and in some cases unwittingly, on this restriction
or stronger ones.
We make satisfaction completeness explicit here 
as a sufficient condition 
for reducing satisfiability problems to unsatisfiability ones.
}
Moreover, as we explain in this paper, a syntactic solution $\lambda x.\,t$ 
for (\ref{eqn:syn_conj}) can be constructed from a refutation 
of (\ref{eqn:neg_syn_conj}),
as opposed to being extracted from the valuation of $f$ in a model of $\forall \vec x\, P[f, \vec x]$.
%(\ref{eqn:syn_conj}).

\sparagraph{Two synthesis methods.}
Proving (\ref{eqn:neg_syn_conj}) unsatisfiable poses its own challenge 
to current SMT solvers, namely, dealing with the second-order universal 
quantification of $f$.
To our knowledge, no SMT solvers so far had direct support for higher-order quantification.
In the following, however, we describe two specialized methods to refute 
negated synthesis conjectures like (\ref{eqn:neg_syn_conj}) 
that build on existing capabilities of these solvers.

The first method applies to a restricted, but fairly common, case of synthesis 
problems $\exists f\, \forall\vec x\, P[f, \vec x]$
where every occurrence of $f$ in $P$ is in terms of the form $f(\vec x)$. 
In this case, we can express the problem in the first-order form 
$\forall \vec x. \exists y. Q[\vec x,y]$ and then tackle its negation using 
appropriate quantifier instantiation techniques.

The second method follows the \emph{syntax-guided synthesis} paradigm~
\cite{AlurETAL13SyntaxguidedSynthesis, AlurETAL2014SyGuSMarktoberdorf}
where the synthesis conjecture is accompanied by an explicit syntactic restriction
on the space of possible solutions.
%A recent line of research has targeted this class of problems~\cite{6679385}
%since it is of practical interest in various applications.
Our syntax-guided synthesis method is based on encoding the syntax of terms
as first-order values. We use
a deep embedding 
into an extension of the background theory $T$ 
with a theory of algebraic data types, encoding the restrictions of a syntax-guided
synthesis problem.
\medskip

{\em For the rest of the paper, we fix a $\Sigma$-theory $T$ and a class $\props$ 
of quantifier-free $\Sigma$-formulas $P[f,\vec x]$ such that
$T$ is satisfaction complete with respect to the class of synthesis conjectures
$\lan := \{\exists f\, \forall\vec x\, P[f, \vec x] \mid P \in \props \}$.
}

%================================================================================
\section{Refutation-Based Synthesis}
\label{sec:refutation-based}
%================================================================================

When axiomatizing properties of a desired function $f$ 
of type $\sigma_1 \times \cdots \times \sigma_n \to \sigma$, 
a particularly well-behaved class are \emph{single-invocation properties}
(see, e.g.,\ \cite{jacobs2011towards}). 
These properties include, in particular, standard function contracts, 
so they can be used to synthesize a function implementation given 
its postcondition as a relation between the arguments and the result 
of the function.
This is also the form of the specification for synthesis problems considered 
in complete functional synthesis \cite{KuncakETAL10CompleteFunctionalSynthesis,KuncakETAL12SoftwareSynthesisProcedures,
KuncakETAL13FunctionalSynthesisLinearArithmeticSets}.
Note that, in our case, we aim to prove that the output exists for all inputs,
as opposed to, more generally, computing the set of inputs for which
the output exists.

A \define{single-invocation property} is any formula of the form
$Q[\vec x, f(\vec x)]$ obtained as an instance of a quantifier-free formula
$Q[\vec x, y]$ not containing $f$.
Note that the only occurrences of $f$ in $Q[\vec x, f(\vec x)]$ are 
in subterms of the form $f(\vec x)$ 
with the \emph{same} tuple $\vec x$ of \emph{pairwise distinct} variables.\footnote{
An example of a property that is \emph{not} single-invocation is
$\forall x_1\,x_2\, f( x_1, x_2 ) \teq f( x_2, x_1 )$.
%, stating that $f$ is a commutative function.
}
The conjecture 
%\vk{should we say \textbf{realizability} problem? Synthesis problem
%should return the representation of the function.}
$\exists f\, \forall \vec x\, Q[\vec x, f(\vec x)]$ is %equivalent in $T$ 
logically equivalent to the \emph{first-order} formula
\begin{equation} \label{eqn:syn_conj_no_syntax}
  \forall \vec x\, \exists y\, Q[\vec x, y]
\end{equation}
%
%In contrast to~\eqref{eqn:syn_conj}, formula~(\ref{eqn:syn_conj_no_syntax}) 
%is first-order.
By the semantics of $\forall$ and $\exists$,
finding a model $\I$ for it amounts (under the axioms of choice) to finding a function
$h:\sigma_1^\I \times \cdots \times \sigma_n^\I \rightarrow \sigma^\I$ 
such that for all $\vec s \in \sigma_1^\I \times \cdots \times \sigma_n^\I$,
the interpretation $\I[\vec x \mapsto \vec s, y \mapsto h(\vec s)]$ satisfies
$Q[\vec x, y]$.
This section considers the case when $\props$ consists of single-invocation 
properties and describes a general approach for determining the satisfiability 
of formulas like~(\ref{eqn:syn_conj_no_syntax}) while computing 
a syntactic representation of a function like $h$ in the process.
For the latter, it will be convenient to assume that the language 
of functions contains an if-then-else operator $\ite$ of type 
$\Bool \times \sigma \times \sigma \to \sigma$ for each sort $\sigma$,
with the usual semantics.

If~(\ref{eqn:syn_conj_no_syntax}) belongs to a fragment that admits 
quantifier elimination in $T$, such as the linear fragment of integer arithmetic,
determining its satisfiability can be achieved using an efficient method 
for quantifier elimination~\cite{Monniaux10QuantifierEliminationLazyModelEnumeration, Bjoerner10LinearQuantifierEliminationAsAbstractDecision}.
Such cases have been examined in the context of software synthesis~\cite{KuncakETAL12SoftwareSynthesisProcedures}.
Here we propose instead an alternative instantiation-based approach aimed 
at establishing the unsatisfiability of the \emph{negated} form 
of~(\ref{eqn:syn_conj_no_syntax}):
\begin{equation} \label{eqn:neg_syn_conj_no_syntax}
  \exists \vec x\, \forall y\, \lnot Q[\vec x, y]
\end{equation}
or, equivalently, of a Skolemized version $\forall y\, \lnot Q[\vec{\con k}, y]$ 
of ~(\ref{eqn:neg_syn_conj_no_syntax}) for some tuple $\vec{\con k}$ of fresh 
uninterpreted constants of the right sort.
Finding a $T$-unsatisfiable finite set $\Gamma$ of ground instances of 
$\lnot Q[\vec k, y]$, 
which is what an SMT solver would do to prove the unsatisfiability of~\eqref{eqn:neg_syn_conj_no_syntax}, 
suffices to solve the original synthesis problem.
The reason is that, then, a solution for $f$ can be constructed 
directly from $\Gamma$, as indicated by the following result.

\begin{proposition}\label{prop:ite-form}
\em
Suppose some set 
$\Gamma = \{\lnot Q[\vec{\con k}, t_1[\vec{\con k}]], \ldots, \lnot Q[\vec{\con k}, t_p[\vec{\con k}]]\}$ 
where $t_1[\vec x]$, $\ldots$, $t_p[\vec x]$ are $\Sigma$-terms of sort $\sigma$
is $T$-un\-satisfiable.
One solution for $\exists f\, \forall \vec x\, Q[\vec x, f(\vec x)]$ is
$\lambda \vec x.\, \ite( Q[\vec x, t_p], t_p, (\,\cdots\, \ite( Q[\vec x, t_2], t_2, t_1 ) \,\cdots\, ))$.
%
%Say that $\neg Q( t_1, \vec{\con k} ), \ldots, \neg  Q( t_n, \vec{\con k} ) \models_T \bot$ for fresh constants $\vec{\con k}$.
%Then, $\ell := \lambda \vec{\con k}. \ \ite( Q( t_1, \vec{\con k} ), t_1, \ldots \ite( Q( t_{n-1}, \vec{\con k} ), t_{n-1}, t_n ) \ldots )$ is a solution for $g$ in $\forall \vec i. \exists g. Q( g, \vec i )$.
\end{proposition}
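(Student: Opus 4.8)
The plan is to reduce the $T$-unsatisfiability of $\Gamma$ to a single universally valid disjunction, and then to verify by a case analysis on the branches of the nested $\ite$ that the proposed term returns a correct output on every input.

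First I would eliminate the Skolem constants. Since $\vec{\con k}$ is a tuple of fresh uninterpreted constants not constrained by $T$, I can treat them exactly as free variables, and then invoke the assumed closure of the models of $T$ under variable reassignment. Under this closure, the $T$-unsatisfiability of $\Gamma = \{\lnot Q[\vec{\con k}, t_i[\vec{\con k}]]\}_{i=1}^p$ is equivalent to the $T$-validity of $\forall \vec x\, \bigvee_{i=1}^p Q[\vec x, t_i[\vec x]]$. Intuitively: whatever the inputs are, at least one of the candidate outputs $t_1, \ldots, t_p$ already satisfies $Q$. I expect this exchange of the Skolem constants for universal quantification to be the only genuinely delicate point, and it rests squarely on closure under variable reassignment.

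Next, writing $t^*[\vec x]$ for the cascade $\ite( Q[\vec x, t_p], t_p, (\,\cdots\, \ite( Q[\vec x, t_2], t_2, t_1 ) \,\cdots\, ))$, I would pin down its operational meaning by induction on the nesting depth, using only the semantics of $\ite$ and the fact that each guard $Q[\vec x, t_i]$ is precisely the condition selecting branch $t_i$. The induction yields: for every model $\I$ of $T$ and every input tuple $\vec s$, the term $t^*[\vec s]$ evaluates to $t_j[\vec s]$ for the largest index $j \in \{2, \ldots, p\}$ whose guard $Q[\vec s, t_j[\vec s]]$ holds, and defaults to $t_1[\vec s]$ when no such index exists.

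Finally I would combine the two facts. Fix a model $\I$ of $T$ and an input $\vec s$. If some guard $Q[\vec s, t_j[\vec s]]$ with $j \geq 2$ holds, then $t^*[\vec s]$ equals the branch of the largest such index, whose guard holds by construction, so $Q[\vec s, t^*[\vec s]]$ holds. Otherwise no guard with $j \geq 2$ holds, and the valid disjunction from the first step forces $Q[\vec s, t_1[\vec s]]$ to hold; since $t^*[\vec s]$ defaults to $t_1[\vec s]$, again $Q[\vec s, t^*[\vec s]]$ holds. As $\I$ and $\vec s$ were arbitrary, $\forall \vec x\, Q[\vec x, t^*]$ is $T$-valid, hence $T$-satisfiable, so $t^*[\vec x]$ is a first-order $\Sigma$-term witnessing a solution and $\lambda \vec x.\, t^*$ solves $\exists f\, \forall \vec x\, Q[\vec x, f(\vec x)]$; by satisfaction completeness the conjecture itself is $T$-satisfiable. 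The remaining work is the routine $\ite$-guard case split, so the proof is essentially complete once the first step is justified.
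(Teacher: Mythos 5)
Your proposal is correct and follows essentially the same argument as the paper's proof: a case split on the greatest index $i \in \{2,\ldots,p\}$ whose guard $Q[\vec x, t_i]$ holds (so the $\ite$-cascade returns $t_i$), with the default case handled by exploiting the $T$-unsatisfiability of $\Gamma$ together with the freshness of $\vec{\con k}$. The only difference is cosmetic: you convert the unsatisfiability of $\Gamma$ upfront into the $T$-validity of $\forall \vec x\, \bigvee_{i=1}^p Q[\vec x, t_i[\vec x]]$, whereas the paper invokes the equivalent entailment $\lnot Q[\vec u, t_2[\vec u]], \ldots, \lnot Q[\vec u, t_p[\vec u]] \models_T Q[\vec u, t_1[\vec u]]$ only in the final case.
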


\begin{longv}
\begin{proof}
Let $\ell$ be the solution specified above, and let $\vec u$ 
be an arbitrary set of ground terms of the same sort as $\vec x$.
Given a model $\I$, we show that $\I \models Q[ \vec u, \ell( \vec u ) ]$.
Consider the case that $\I \models Q[ \vec u, t_i [\vec{u}] ]$ for some $i \in \{ 2, \ldots, p \}$;
pick the greatest such $i$.
Then, $\ell( \vec u )^\I = ( t_i [\vec{u}] )^\I$, and thus $\I \models Q[ \vec u, \ell( \vec u ) ]$. 
If no such $i$ exists, then $\I \models \neg Q[ \vec u, t_i [\vec{u}] ]$ for all $i = 2, \ldots, p$, and $\ell( \vec u )^\I = ( t_1 [\vec{u}] )^\I$.
Since $\Gamma$ is $T$-unsatisfiable 
and $\vec{\con k}$ are fresh,
we have $\neg Q[ \vec u, t_2 [\vec{u}] ], \ldots, \neg  Q[ \vec u, t_p [\vec{u}] ] \models_T Q[ \vec u, t_1 [\vec{u}] ]$, which is $Q[ \vec u, \ell( \vec u ) ]$.
\eop
\end{proof}

\ 
\end{longv}

%In the following, we write $\lambda \vec{\con k}. \ite( Q( t_1, \vec{\con k} ), t_1, \ldots \ite( Q( t_{n-1}, \vec{\con k} ), t_{n-1}, t_n ) \ldots )$ shorthand as $L( t_1, \ldots, t_n )$.

\begin{example} \label{ex:max}
Let $T$ be the theory of linear integer arithmetic with the usual signature
and integer sort $\Int$.
%ct removed as requeste by viktor
%One can show that $T$ is satisfaction complete with respect to the whole class
%of closed $\Sigma$-formulas.
Let $\vec x = (x_1, x_2)$. 
Now consider the property
\begin{eqnarray} \label{eq:max-orig}
% P[f, x_1, x_2] :=
% f( x_1, x_2 ) \geq x_1 \land f( x_1, x_2 ) \geq x_2 \land 
% ( f( x_1, x_2 ) \teq x_1 \lor f( x_1, x_2 ) \teq x_2 )
 P[f, \vec x] :=
 f( \vec x ) \geq x_1 \land f( \vec x ) \geq x_2 \land 
 ( f( \vec x ) \teq x_1 \lor f( \vec x ) \teq x_2 )
\end{eqnarray}
with $f$ of type $\Int \times \Int \rightarrow \Int$ and
$x_1, x_2$ of type $\Int$.
The synthesis problem $\exists f\, \forall \vec x\, P[f, \vec x]$ is solved 
exactly by the function that returns the maximum of its two inputs.
Since $P$ is a single-invocation property, we can solve that problem by proving 
the $T$-unsatisfiability of the conjecture 
$\exists \vec x\, \forall y\, \lnot Q[\vec x, y]$ where 
\begin{eqnarray} \label{eq:max}
 Q[\vec x, y] & := & 
 y \geq x_1 \land y \geq x_2 \land ( y \teq x_1 \lor y \teq x_2 )
\end{eqnarray}
After Skolemization the conjecture becomes 
$\forall y\, \lnot Q[\vec{\con a}, y]$ 
for fresh constants $\vec{\con a} = (\con a_1, \con a_2)$.
When asked to determine the satisfiability of that conjecture an SMT solver
may, for instance, instantiate it with $\con a_1$ and then $\con a_2$ for $y$, 
producing the $T$-unsatisfiable set 
$\{\lnot Q[\vec{\con a}, \con a_1], \lnot Q[\vec{\con a}, \con a_2]\}$.
By Proposition~\ref{prop:ite-form}, one solution 
for $\forall \vec x\, P[f, \vec x]$ is
$f = \lambda \vec x.\, \ite( Q[\vec x, x_2], x_2, x_1 )$,
%which is $\lambda k_1 k_2. \ \ite( k_2 \geq k_1 \wedge k_2 \geq k_2 \wedge ( k_2 \teq k_1 \vee k_2 \teq k_2 ), k_2, k_1 )$ and simplifies 
which simplifies to $\lambda \vec x.\, \ite( x_2 \geq x_1, x_2, x_1 )$, representing the desired maximum function.
\qed
\end{example}

%-----------------------------------------------------------------------------
\begin{figure}[t]
\begin{enumerate}
%\begin{framed}
\item $\Gamma := \{\con G \Rightarrow Q[\vec{\con k}, \con e]\}$
where $\vec{\con k}$ consists of distinct fresh constants
\item Repeat
 \begin{itemize}
  \item[\ ]
  If there is a model $\I$ of $T$ satisfying $\Gamma$ and $\con G$ \\
  then let $\Gamma := \Gamma \cup \{ \lnot Q[\vec{\con k},t[\vec{\con k}]] \}$ 
  for some $\Sigma$-term $t[\vec x]$ such that $t[\vec{\con k}]^\I = {\con e}^\I$;  \\
  otherwise, return ``no solution found''
 \end{itemize}
until $\Gamma$ contains a $T$-unsatisfiable set 
$\{\lnot Q[\vec{\con k}, t_1[\vec{\con k}]], \ldots, \lnot Q[\vec{\con k}, t_p[\vec{\con k}]] \}$
\item 
Return $\lambda \vec x.\, \ite( Q[\vec x, t_p[\vec x]], t_p[\vec x],\ (\,\cdots\, \ite( Q[\vec x, t_2[\vec x]], t_2[\vec x], t_1[\vec x] ) \,\cdots\, ))$
for $f$
%If $p = 0$ then
%return $\lambda \vec x.\, t_1[\vec x]$ as a solution;\\
%otherwise, return $\lambda \vec x.\, \ite( Q[\vec x, t_1[\vec x]], t_1[\vec x],\ (\,\cdots\, \ite( Q[\vec x, t_p[\vec x]], t_p[\vec x], t_{p+1}[\vec x] ) \,\cdots\, ))$
%\end{framed}
\end{enumerate}
\vspace{-2ex}
\caption{A refutation-based synthesis procedure for single-invocation property 
$\exists f\, \forall \vec x\, Q[\vec x, f(\vec x)]$.
\label{fig:proc1}}
\end{figure}

\sparagraph{Synthesis by Counterexample-Guided Quantifier Instantiation.}
Given Proposition~\ref{prop:ite-form}, the main question is how to get the SMT solver to generate the necessary
ground instances from $\forall y\, \lnot Q[\vec{\con k}, y]$.
Typically, SMT solvers that reason about quantified formulas use 
heuristic quantifier instantiation techniques based on E-matching~\cite{MouraBjoerner07EfficientEmatchingSmtSolvers}, 
which instantiates universal quantifiers with terms occurring 
in some current set of ground terms built incrementally from the input formula.
%It is fair to say that 
Using E-matching-based heuristic instantiation alone is unlikely 
to be effective in synthesis, where required terms need to be synthesized based 
on the semantics of the input specification.
This is confirmed by our preliminary experiments, even for simple conjectures.
We have developed instead a specialized new technique, which we refer to as \emph{counterexample-guided quantifier instantiation}, 
that allows the SMT solver to quickly converge in many cases to the instantiations 
that refute the negated synthesis conjecture (\ref{eqn:neg_syn_conj_no_syntax}).

The new technique is similar to a popular scheme for synthesis known 
as counterexample-guided inductive synthesis, implemented 
in various synthesis approaches (e.g.,~\cite{SolarLezamaETAL06CombinatorialSketchingFinitePrograms,
JhaETAL10OracleguidedComponentbasedProgramSynthesis}),
but with the major difference of being built-in directly 
into the SMT solver.
The technique is illustrated by the procedure in Figure~\ref{fig:proc1}, 
which grows a set $\Gamma$ of ground instances of $\lnot Q[\vec{\con k}, y]$ 
starting with the formula $\con G \Rightarrow Q[\vec{\con k}, \con e]$
where $\con G$ and $\con e$ are fresh constants of sort $\Bool$ and $\sigma$, 
respectively.
Intuitively, $\con e$ represents a current, partial solution for the original 
synthesis conjecture $\exists f\, \forall \vec x\, Q[\vec x, f(\vec x)]$,
while $\con G$ represents the possibility that the conjecture has 
a (syntactic) solution in the first place.

The procedure, which may not terminate in general, terminates either 
when $\Gamma$ becomes unsatisfiable, in which case it has found a solution, or 
when $\Gamma$ is still satisfiable but all of its models falsify $\con G$, 
in which case the search for a solution was inconclusive.
The procedure is not \define{solution-complete}, that is,
it is not guaranteed to return a solution whenever there is one.
However, thanks to Proposition~\ref{prop:ite-form}, it is \define{solution-sound}:
every $\lambda$-term it returns is indeed a solution 
of the original synthesis problem.

\sparagraph{Finding instantiations.}
The choice of the term $t$ in Step 2 of the procedure is intentionally 
left underspecified because it can be done in a number of ways.
Having a good heuristic for such instantiations is, however, critical 
to the effectiveness of the procedure in practice.
In a $\Sigma$-theory $T$, like integer arithmetic, with a fixed interpretation 
for symbols in $\Sigma$ and a distinguished set of ground $\Sigma$-terms 
denoting the elements of a sort, a simple, if naive, choice for $t$ 
in Figure~\ref{fig:proc1} is the distinguished term denoting the element 
${\con e}^\I$.
For instance, if $\sigma$ is $\Int$ in integer arithmetic, $t$ could be
a concrete integer constant ($0,\pm 1, \pm 2, \ldots$).
This choice amounts to testing whether points in the codomain of the sought 
function $f$ satisfy the original specification $P$.

More sophisticated choices for $t$, in particular where $t$ contains
the variables $\vec x$, may increase the generalization power of this procedure 
and hence its ability to find a solution. 
%For instance, if $T$ is again linear arithmetic and the specification $P[f,x]$ 
%is (trivially) $f(x) = x + 1$, a good heuristics for generating the term $t$ 
%would pick $x + 1$, which leads immediately to the solution $\lambda x\, x+1$.
For instance, our present implementation in the \cvc solver relies 
on the fact that the model $\I$ in Step~2 is constructed from a set 
of equivalence classes over terms computed by the solver during its search.
The procedure selects the term $t$ among those in the equivalence class of $e$, 
other than $e$ itself. 
For instance, consider formula~(\ref{eq:max}) from the previous example 
that encodes the single-invocation form of the specification 
for the max function.
The DPLL(T) architecture, on which \cvc is based, finds a model 
for $Q[\vec{\con a}, \con e ]$ with $\vec{\con a} = (\con a_1, \con a_2)$ 
only if it can first find a subset $M$ of that formula's literals 
that collectively entail $Q[ \vec{\con a}, \con e ]$ at the propositional level.
Due to the last conjunct of~(\ref{eq:max}), $M$ must include either 
$\con e \teq \con a_1$ or $\con e \teq \con a_2$.
Hence, whenever a model can be constructed for $Q[ \vec{\con a}, e ]$,
the equivalence class containing $e$ must contain either $\con a_1$ or $\con a_2$.
Thus using the above selection heuristic, the procedure in Figure~\ref{fig:proc1}
will, after at most two iterations of the loop in Step 2, add the instances 
$\neg Q[ \vec{\con a}, \con a_1 ]$ and $\neg Q[ \vec{\con a}, \con a_2 ]$ 
to $\Gamma$.
As noted in Example~\ref{ex:max}, these two instances are jointly 
$T$-unsatisfiable.
We expect that more sophisticated instantiation techniques can be incorporated.
In particular, both quantifier elimination techniques \cite{Bjoerner10LinearQuantifierEliminationAsAbstractDecision,
Monniaux10QuantifierEliminationLazyModelEnumeration} 
and approaches currently used to infer invariants from templates \cite{MadhavanKuncak14SymbolicResourceBoundInferenceFunctionalPrograms,
Cousot05ProvingProgramInvarianceTerminationParametricAbstraction}
are likely to be beneficial for certain classes of synthesis problems.
The advantage of developing these techniques within an SMT solver is that 
they directly benefit both synthesis and verification in the presence 
of quantified conjectures, thus fostering cross-fertilization between
different fields.

%For instance, a choice for $t$ when $P$ is a formula in quantifier-free linear arithmetic amounts to lazily enumerating the disjuncts arising from a quantifier elimination procedure.

%================================================================================
\section{Refutation-Based Syntax-Guided Synthesis} \label{sec:syntax-guided}
%================================================================================

In syntax-guided synthesis, the functional specification is strengthened 
by an accompanying set of syntactic restrictions on the form of 
the expected solutions.
In a recent line of work~\cite{AlurETAL13SyntaxguidedSynthesis, AlurETAL2014SyGuSMarktoberdorf, DBLP:journals/corr/RaghothamanU14}
these restrictions are expressed
by a grammar $R$ (augmented with a kind of \emph{let} binder) defining 
the language of solution terms, or \define{programs}, for the synthesis problem.
In this section, we present a variant of the approach in the previous section
that incorporates the syntactic restriction directly into the SMT solver 
via a deep embedding of the syntactic restriction $R$ into the solver's logic.
The main idea is to represent $R$ as a set of algebraic datatypes and build 
into the solver an interpretation of these datatypes in terms 
of the original theory $T$.

While our approach is parametric in the background theory $T$ and 
the restriction $R$, it is best explained here with a concrete example. 
%We will use it as a running example for the rest of the section.

\begin{figure}[t]
\centering
$
\begin{array}{l@{\qquad}l}
 \forall x\,y\: \ev( \con x_1, x, y ) \teq x & 
 \forall s_1\, s_2\, x\,y\:
  \ev( \con{leq}(s_1, s_2), x, y ) \teq (\ev( s_1, x, y ) \leq \ev( s_2, x, y )) 
\\[.2ex]
 \forall x\,y\: \ev( \con x_2, x, y ) \teq y &
 \forall s_1\, s_2\, x\,y\:
  \ev( \con{eq}(s_1, s_2), x, y ) \teq (\ev( s_1, x, y ) \teq \ev( s_2, x, y )) 
 \\[.2ex]
 \forall x\,y\: \ev( \con{zero}, x, y ) \teq 0 &
 \forall c_1\, c_2\, x\, y\: 
  \ev( \con{and}(c_1, c_2), x, y ) \teq (\ev( c_1, x, y ) \land \ev( c_2, x, y ))
 \\[.2ex]
 \forall x\,y\: \ev( \con{one}, x, y ) \teq 1 & 
 %\forall c_1 c_2 xy\: 
 % \ev( \con{or}(c_1, c_2), x, y ) \teq (\ev( c_1, x, y ) \lor \ev( c_2, x, y ))
 \forall c\,x\,y\: \ev( \con{not}(c), x, y ) \teq \lnot \ev( c, x, y )
 \\[.2ex]
 \multicolumn{2}{l}{
  \forall s_1\, s_2\, x\,y\: 
   \ev( \con{plus}(s_1, s_2), x, y ) \teq \ev( s_1, x, y ) + \ev( s_2, x, y )}
 \\[.2ex]
 \multicolumn{2}{l}{
  \forall s_1\, s_2\, x\,y\: 
   \ev( \con{minus}(s_1, s_2), x, y ) \teq \ev( s_1, x, y ) - \ev( s_2, x, y )}
 \\[.2ex]
 \multicolumn{2}{l}{
  \forall c\, s_1\, s_2\, x\,y\:
   \ev( \con{if}( c, s_1, s_2 ), x, y ) \teq 
   \ite( \ev( c, x, y ), \ev( s_1, x, y ), \ev( s_2, x, y ) )}
% \\[.2ex]
% \multicolumn{2}{l}{
%  \forall cxy\: \ev( \con{not}(c), x, y ) \teq \lnot \ev( c, x, y )}
\end{array}
$
\caption{Axiomatization of the evaluation operators in grammar $R$ from Example~\ref{ex:max-sygus}.}
\label{fig:ev}
\end{figure}

\begin{example} \label{ex:max-sygus}
Consider again the synthesis conjecture~(\ref{eq:max}) from Example~\ref{ex:max}
but now with a syntactic restriction $R$ for the solution space expressed 
by these %(mutually) 
algebraic datatypes:
\[
 \begin{array}{l@{\quad}l@{\quad}l}
  \con S & := & \con x_1 \mid \con x_2 \mid \con{zero} \mid \con{one} \mid  
                \con{plus}(\con S, \con S) \mid \con{minus}(\con S, \con S) \mid 
                \con{if}( \con C, \con S, \con S ) 
  \\[1ex]
  \con C & := & \con{leq}(\con S, \con S) \mid \con{eq}(\con S, \con S) \mid 
                \con{and}(\con C, \con C) \mid %\con{or}(\con C, \con C) \mid
                \con{not}(\con C)
 \end{array}
\]
The datatypes are meant to encode a term signature that includes nullary
constructors for the variables $x_1$ and $x_2$ of~(\ref{eq:max}), and 
constructors for the symbols of the arithmetic theory $T$.
Terms of sort $\con S$ (resp., $\con C$) refer to theory terms of sort $\Int$ 
(resp., $\Bool$).
%\footnote{
%It is important to note that the symbols shown in the definition of $S$ and $C$ denote datatype constructors, 
%and are not to be confused with the builtin theory operators they correspond to.
%This will be unambiguous from the context in which we use these symbols.
%}

Instead of the theory of linear integer arithmetic, we now consider 
its combination $\TD$ with the theory of the datatypes above extended 
with two \define{evaluation operators}, that is, two function symbols 
$\ev^{\con S \times \Int \times \Int \to \Int}$ and 
$\ev^{\con C \times \Int \times \Int \to \Bool}$ respectively embedding 
$\con S$ in $\Int$ and $\con C$ in $\Bool$.
We define $\TD$ so that all of its models satisfy the formulas in Figure~\ref{fig:ev}.
% We also omit the admission of omitting...
%\footnote{
%We omit a formal definition of $\TD$ for space constraints,
%hoping that the given description provides enough of an intuition.
%}
The evaluation operators effectively define an interpreter for programs 
(i.e., terms of sort $\con S$ and $\con C$) with input parameters $x_1$ and $x_2$.

It is possible to instrument an SMT solver that support user-defined datatypes,
quantifiers and linear arithmetic so that it constructs automatically 
from the syntactic restriction $R$ both the datatypes $\con S$ and $\con C$ and 
the two evaluation operators.
Reasoning about $\con S$ and $\con C$ is done by the built-in subsolver 
for datatypes.
Reasoning about the evaluation operators is achieved by reducing ground terms 
of the form $\ev(d, t_1, t_2)$ to smaller terms by means 
of selected instantiations of the axioms from Figure~\ref{fig:ev},
with a number of instances proportional to the size of term $d$.
It is also possible to show that $\TD$ is satisfaction complete 
with respect to the class
\begin{eqnarray*}
\lan_2 & := & \{
 \exists g\, \forall \vec z\, P[\lambda \vec z.\, \ev(g, \vec z),\, \vec x] 
 \mid
 P[f, \vec x] \in \props
\}
\end{eqnarray*}
where instead of terms of the form $f(t_1, t_2)$ in $P$ we have, 
modulo $\beta$-reductions, terms of the form $\ev(g, t_1, t_2)$.\footnote{
We stress again, that both the instrumentation of the solver and 
the satisfaction completeness argument for the extended theory are generic 
with respect to the syntactic restriction on the synthesis problem and 
the original satisfaction complete theory $T$.
}
For instance, the formula $P[f, \vec x]$ in Equation~(\ref{eq:max-orig}) 
from Example~\ref{ex:max} can be restated in $\TD$ as the formula below
where $g$ is a variable of type $\con S$:
\begin{eqnarray*}
 P_\ev[ g, \vec x ] & := & 
 \ev( g, \vec x ) \geq x_1 \land \ev( g,\vec x ) \geq x_2 \land
 ( \ev( g, \vec x ) \teq x_1 \lor \ev( g,\vec x ) \teq x_2 )
\end{eqnarray*} 
%
%(and $\ev$ is the one with type $\con S \times \Int \times \Int \to \Int$).
In contrast to $P[f, \vec x]$, the new formula $P_\ev[ g, \vec x ]$ is 
first-order, with the role of the second-order variable $f$ now played 
by the first-order variable $g$.
% which encodes the accepted syntax for solutions for $f$.

When asked for a solution for~(\ref{eq:max-orig}) under the restriction $R$,
the instrumented SMT solver will try to determine instead the 
$\TD$-unsatisfiability of $\forall g\, \exists \vec x\, \lnot P_\ev[g, \vec x]$.
Instantiating $g$ in the latter formula with 
$s := \con{if}( \con{leq}(\con x_1, \con x_2), \con x_2, \con x_1 )$, say, produces 
a formula that the solver can prove to be $\TD$-unsatisfiable.
This suffices to show that the program $\ite(x_1 \leq x_2, x_2, x_1)$, 
the analogue of $s$ in the language of $T$, is a solution 
of the synthesis conjecture~(\ref{eq:max-orig}) under the syntactic restriction $R$.
\qed
\end{example}

\begin{figure}[t]
\begin{enumerate}
\item
$\Gamma := \emptyset$
\item Repeat
 \begin{enumerate}
  \item \label{it:model-i}
  Let $\vec{\con k}$ be a tuple of distinct fresh constants. \\
   If there is a model $\I$ of $\TD$ satisfying $\Gamma$ \emph{and} $\con G$, then
   $\Gamma := \Gamma \cup \{ \lnot P_\ev[{\con e}^\I, \vec{\con k}] \}$ ; \\
   otherwise, return ``no solution found''
  \item \label{it:model-j}
   If there is a model $\mathcal J$ of $\TD$ satisfying $\Gamma$, then
   $\Gamma := \Gamma \cup \{ \con{G} \Rightarrow P_\ev[\con e, \vec{\con k}^{\mathcal J}] \}$ ; \\
   otherwise, return ${\con e}^\I$ as a solution
 \end{enumerate}
\end{enumerate}
\vspace{-2ex}
\caption{A refutation-based syntax-guided synthesis procedure for $\exists f\, \forall \vec x\, P_\ev[f,\vec x]$.
}
\label{fig:proc2}
\end{figure}

To prove the unsatisfiability of formulas like
$\forall g\, \exists \vec x\, \lnot P_\ev[g, \vec x]$
in the example above we use a procedure similar to that 
in Section~\ref{sec:refutation-based}, 
but specialized to the extended theory $\TD$.
The procedure is described in Figure~\ref{fig:proc2}.
Like the one in Figure~\ref{fig:proc1}, it uses an uninterpreted constant 
$\con e$ representing a solution candidate, and a Boolean variable $\con G$ 
representing the existence of a solution. 
The main difference, of course, is that now $\con e$ ranges over the datatype
representing the restricted solution space.
In any model of $\TD$, a term of datatype sort evaluates to a term built 
exclusively with constructor symbols.
This is why the procedure returns in Step~\ref{it:model-j} the value of $\con e$ 
in the model $\I$ found in Step~\ref{it:model-i}.
As we showed in the previous example, a program that solves the original problem 
can then be reconstructed from the returned datatype term.

\begin{figure}[t]
\[\begin{array}{c@{\hspace{1em}}l@{\hspace{1em}}l}
\hline
\text{Step} & \text{Model} & \text{Added Formula} \ 
\\
\hline
 \ref{it:model-i} & \{ \con e \mapsto \con x_1, \ldots \} &
 \lnot P_\ev[ \con x_1, \con a_1, \con b_1 ]
 \\
 \ref{it:model-j} & \{ \con a_1 \mapsto 0, \con b_1 \mapsto 1, \ldots \} & 
 \con G \Rightarrow P_\ev[ \con e, 0, 1 ] 
 \\
 \ref{it:model-i} & \{ \con e \mapsto \con x_2, \ldots \} & 
 \lnot P_\ev[ \con x_2, \con a_2, \con b_2 ] 
 \\
 \ref{it:model-j} & \{ \con a_2 \mapsto 1, \con b_2 \mapsto 0, \ldots \} &
  \con G \Rightarrow P_\ev[ \con e, 1, 0 ] 
  \\
 \ref{it:model-i} & \{ \con e \mapsto \con{one}, \ldots \} &
 \lnot P_\ev[ \con{one}, \con a_3, \con b_3 ] 
 \\
 \ref{it:model-j} & \{ \con a_3 \mapsto 2, \con b_3 \mapsto 0, \ldots \} & 
 \con G \Rightarrow P_\ev[ \con e, 2, 0 ] 
 \\
 \ref{it:model-i} & \{ \con e \mapsto \con{plus}(\con x_1, \con x_2), \ldots \} &
 \lnot P_\ev[ \con{plus}(\con x_1, \con x_2), \con a_4, \con b_4 ] 
 \\
 \ref{it:model-j} & \{ \con a_4 \mapsto 1, \con b_4 \mapsto 1, \ldots \} & 
 \con G \Rightarrow P_\ev[ \con e, 1, 1 ] 
 \\
 \ref{it:model-i} & \{ \con e \mapsto \con{if}( \con{leq}(\con x_1, \con{one}), \con{one}, \con x_1 ), \ldots \} & 
 \lnot P_\ev[ \con{if}( \con{leq}(\con x_1, \con{one}), \con{one}, \con x_1 ), \con a_5, \con b_5 ] 
 \\
 \ref{it:model-j} & \{ \con a_5 \mapsto 1, \con b_5 \mapsto 2, \ldots \} & 
 \con G \Rightarrow P_\ev[ \con e, 1, 2 ] 
 \\
 \ref{it:model-i} & \{ \con e \mapsto \con{if}( \con{leq}(\con x_1, \con x_2), \con x_2, \con x_1 ), \ldots \} & 
 \lnot P_\ev[ \con{if}( \con{leq}(\con x_1, \con x_2), \con x_2, \con x_1 ), \con a_6, \con b_6 ] 
 \\
 \ref{it:model-j} & \text{none} &  \\
\hline
\end{array}
\]
\smallskip

For $i=1,\ldots,6$, 
$\con a_i$ and $\con b_i$ are fresh constants of type $\Int$.

\caption{A run of the procedure from Figure~\ref{fig:proc2}.}
\label{fig:run}
\end{figure}

\sparagraph{Implementation.}
We implemented the procedure in the \cvc solver. %~\cite{CVC4-CAV-11}.
Figure~\ref{fig:run} shows a run of that implementation over the conjecture 
from Example~\ref{ex:max-sygus}.
In this run, note that each model found for $\con e$ satisfies all values 
of counterexamples found for previous candidates.
After the sixth iteration of Step~\ref{it:model-i}, the procedure finds 
the candidate $\con{if}( \con{leq}(\con x_1, \con x_2), \con x_2, \con x_1 )$, 
for which no counterexample exists, indicating that the procedure has found 
a solution for the synthesis conjecture.
Currently, this problem can be solved in about $0.5$ seconds 
in the latest development version of \cvc.

To make the procedure practical it is necessary to look 
for \emph{small} solutions to synthesis conjectures.
A simple way to limit the size of the candidate solutions is to consider 
smaller programs before larger ones.
Adapting techniques for finding finite models of minimal size~\cite{reynolds2013finite}, 
we use a strategy that starting, from $n = 0$, searches for programs 
of size $n+1$ only after its has exhausted the search for programs of size $n$.
In solvers based on the DPLL($T$) architecture, like \cvc,
this can be accomplished by introducing a splitting lemma of the form 
$( \size( \con e ) \leq 0 \lor \lnot \size( \con e ) \leq 0 )$ and
asserting $\size( \con e ) \leq 0$ as the first decision literal, 
where $\size$ is a function symbol of type $\sigma \to \Int$ 
for every datatype sort $\sigma$ and stands for the function
that maps each datatype value to its term size 
(i.e., the number of non-nullary constructor applications in the term).
We do the same for $\size( \con e ) \leq 1$ if and when $\lnot \size( \con e ) \leq 0$ becomes asserted.
We extended the procedure for algebraic datatypes in \cvc~\cite{BarST-JSAT-07} to handle constraints involving $\size$.
The extended procedure remains a decision procedure for input problems
with a concrete upper bound on terms of the form $\size(u)$, 
for each variable or uninterpreted constant $u$ of datatype sort in the problem.
This is enough for our purposes since the only term $u$ like that
in our synthesis procedure is $\con e$.

\begin{proposition} \em
\label{prop:sygus-sound-complete}
With the search strategy above, the procedure in Figure~\ref{fig:proc2} has 
the following properties:
\begin{enumerate}
\item (Solution Soundness) 
Every term it returns can be mapped to a solution of the original synthesis conjecture 
$\exists f\,\forall \vec x\, P[f, \vec x]$ under the restriction $R$.
\item (Refutation Soundness) 
If it answers ``no solution found'', the original conjecture
has no solutions under the restriction $R$.
\item (Solution Completeness) 
If the original conjecture has a solution under $R$,
the procedure will find one.
\end{enumerate}
\end{proposition}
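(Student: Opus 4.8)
The plan is to derive all three properties from two ingredients: a structural invariant describing the set $\Gamma$ maintained by the loop, and repeated use of satisfaction completeness of $\TD$ to promote a $\TD$-satisfiable \emph{closed} formula to a $\TD$-valid one, so that a witness found in one model can be realized in every model. First I would record the invariant shape of $\Gamma$ at the top of each iteration: it is a conjunction of \emph{counterexample} clauses $\lnot P_\ev[c_j, \vec{\con k}_j]$, where each $c_j$ is a ground datatype term (a past candidate) and the $\vec{\con k}_j$ are mutually distinct fresh constants, together with \emph{candidate-restricting} clauses $\con G \Rightarrow P_\ev[\con e, \vec a_i]$, where each $\vec a_i$ is a tuple of concrete inputs. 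I would also fix the correspondence supplied by the evaluation axioms of Figure~\ref{fig:ev}: for every ground datatype term $d$ there is a $\Sigma$-term $t_d[\vec x]$, its analogue in the language of $T$, with $\ent[\TD] \ev(d,\vec x)\teq t_d[\vec x]$, so that $\forall \vec x\,P_\ev[d,\vec x]$ and $\forall \vec x\,P[t_d,\vec x]$ are interchangeable and the latter projects to a statement over $T$ alone.

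For \textbf{solution soundness}, suppose the procedure returns ${\con e}^\I = d$ in Step~\ref{it:model-j}. At that point $\Gamma = \Gamma_0 \cup \{\lnot P_\ev[d, \vec{\con k}]\}$ with $\vec{\con k}$ fresh, the model $\I$ from Step~\ref{it:model-i} satisfies $\Gamma_0$, and $\Gamma$ is $\TD$-unsatisfiable. I would argue that the closed formula $\exists \vec{\con k}\,\lnot P_\ev[d,\vec{\con k}]$ is $\TD$-unsatisfiable: were it satisfiable it would be $\TD$-valid by satisfaction completeness, hence true in $\I$, and since $\vec{\con k}$ does not occur in $\Gamma_0$ I could reassign $\vec{\con k}$ in $\I$ to satisfy all of $\Gamma$, contradicting its unsatisfiability. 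Thus $\forall \vec x\,P_\ev[d,\vec x]$ is $\TD$-valid, and by the evaluation-axiom correspondence $\lambda\vec x.\,t_d$ is a solution of $\exists f\,\forall\vec x\,P[f,\vec x]$ that lies in $R$ because $d$ is a term of the datatype sort encoding $R$.

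For \textbf{refutation soundness} I would prove the contrapositive: if a solution exists under $R$ then Step~\ref{it:model-i} never reports ``no solution found'', i.e.\ $\Gamma\wedge\con G$ stays $\TD$-satisfiable. Let $d$ be a datatype term whose analogue solves the conjecture, so $\forall\vec x\,P_\ev[d,\vec x]$ is $\TD$-valid. I build a model of $\Gamma\wedge\con G$ by setting $\con G\mapsto\ltrue$ and $\con e\mapsto d$: this satisfies every candidate-restricting clause because each $P_\ev[d,\vec a_i]$ then holds. For the counterexample clauses I use the invariant that each was generated from a model $\mathcal J$ in Step~\ref{it:model-j}, so each closed formula $\exists \vec{\con k}_j\,\lnot P_\ev[c_j,\vec{\con k}_j]$ is $\TD$-satisfiable, hence $\TD$-valid; since the $\vec{\con k}_j$ are distinct and disjoint from $\con e$, I can choose, within a single $\TD$-model, values for all of them witnessing the respective $\lnot P_\ev[c_j,\vec{\con k}_j]$. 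Hence $\Gamma\wedge\con G$ is satisfiable, and the procedure can only take its ``no solution found'' exit when no solution exists.

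The \textbf{main obstacle} is \textbf{solution completeness}, which is essentially a termination argument resting on the size-bounded search. The key monotonicity fact I would establish is that a refuted candidate is excluded forever: if $c$ is refuted by a concrete counterexample $\vec a$, then $\lnot P_\ev[c,\vec a]$ is a closed $\TD$-satisfiable formula, hence $\TD$-valid, so the clause $\con G\Rightarrow P_\ev[\con e,\vec a]$ forbids $\con e = c$ while $\con G$ holds. Assuming a solution exists, let $n^\ast$ be the least size of a solution and let $d$ be a solution of that size. As in refutation soundness, $d$ is never refuted, so $\Gamma\wedge\con G\wedge \size(\con e)\le n^\ast$ remains satisfiable; therefore the size strategy only ever proposes candidates of size at most $n^\ast$. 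Since there are finitely many datatype terms of size at most $n^\ast$ and each can be proposed at most once before being permanently excluded, the loop performs finitely many iterations; as ``no solution found'' cannot occur and the only remaining exit returns a solution in Step~\ref{it:model-j}, the procedure must terminate with a solution. The delicate points I expect to have to argue carefully are that the internal size strategy really does yield candidates of bounded, non-decreasing size and that each size level is exhausted in finitely many steps; I would make both precise using the finiteness of datatype terms per size level together with the permanent-exclusion fact above.
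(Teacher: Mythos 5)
Your proposal is correct and follows essentially the same route as the paper's proof: both soundness claims rest on using satisfaction completeness of $\TD$ (together with freshness of the constants) to promote satisfiable closed formulas to valid ones, and completeness rests on the same two facts the paper uses, namely that a refuted candidate is permanently excluded and that the size-bounded strategy confines all candidates to the finite set of datatype terms no larger than a minimal solution. The only cosmetic differences are that you argue refutation soundness contrapositively (building a model of $\Gamma \wedge \con G$ from a hypothetical solution) where the paper directly derives the $\TD$-unsatisfiability of $\exists y\, \forall \vec x\, P_\ev[y, \vec x]$, and that your termination argument spells out the size-bound reasoning the paper leaves implicit.
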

\begin{proof}
To show solution soundness,
consider the case when the procedure returns $e^\I$ as a solution.
Then, $\Gamma \cup \neg P_\ev[ e^\I, \vec{\con k}]$ is $\TD$-unsatisfiable
for some $\Gamma, \vec{\con k}$, where $\Gamma$ is $\TD$-satisfiable and $\vec{\con k}$ is a tuple of distinct fresh constants.
Since $\vec{\con k}$ are fresh, $\Gamma \cup \exists \vec{x}\, \neg P_\ev[ e^\I, \vec{x}]$ is $\TD$-unsatisfiable.
Since $\Gamma$ is $\TD$-satisfiable and $\Gamma \cup \exists \vec{x}\, \neg P_\ev[ e^\I, \vec{x}]$ is not, 
then at least one model of $\TD$ (namely, one for $\Gamma$) does not satisfy $\exists \vec{x}\, \neg P_\ev[ e^\I, \vec{x}]$.
Thus, since $\TD$ is satisfaction complete, no models of $\TD$ satisfy $\exists \vec{x}\, \neg P_\ev[ e^\I, \vec{x}]$,
and thus all models of $\TD$ satisfy $\forall \vec{x}\, P_\ev[ e^\I, \vec{x}]$.
Assuming our translation from $P$ to $P_\ev$ is faithful, 
the analogue of $e^\I$ in the language of $T$ is a solution for the conjecture $\exists f\,\forall \vec x\, P[f, \vec x]$.

To show refutation soundness, 
consider the case when the procedure returns ``no solution found".
Then, there exists a $\Gamma = ( \Gamma' \cup G \Rightarrow P_\ev[ e, \vec{\con k}^{\mathcal J}] )$ such that
$\Gamma'$ is $\TD$-satisfiable, and $\Gamma \cup G$ is $\TD$-unsatisfiable.
Clearly based on the clauses added by the procedure, we have that $\Gamma$ is equivalent to 
$\Gamma'' \cup G \Rightarrow ( P_\ev[ \con e, \vec{\con u}_1] \wedge \ldots \wedge P_\ev[ \con e, \vec{\con u}_n] )$,
for some $\vec{\con u}_1 \ldots \vec{\con u}_n$ where $\Gamma'' \subseteq \Gamma'$ is $\TD$-satisfiable and does not contain $G$ or $\con e$.
Since $\Gamma \cup G$ is $\TD$-unsatisfiable, we have that 
$\Gamma'' \cup P_\ev[ \con e, \vec{\con u}_1] \wedge \ldots \wedge P_\ev[ \con e, \vec{\con u}_n]$ is $\TD$-unsatisfiable.
Since $\Gamma''$ does not contain $\con e$, 
$\Gamma'' \cup \exists y\, ( P_\ev[ y, \vec{\con u}_1] \wedge \ldots \wedge P_\ev[ y, \vec{\con u}_n] )$ is $\TD$-unsatisfiable.
Since $\TD$ is satisfaction complete and $\Gamma''$ is $\TD$-satisfiable, 
$\exists y\, ( P_\ev[ y, \vec{\con u}_1] \wedge \ldots \wedge P_\ev[ y, \vec{\con u}_n] )$ is $\TD$-unsatisfiable.
Thus, $\exists y\, ( P_\ev[ y, \vec{\con u}_1] \wedge \ldots \wedge P_\ev[ y, \vec{\con u}_n] )$ is $\TD$-unsatisfiable,
and thus $\exists y\, \forall \vec{x}\, P_\ev[ y, \vec{x}]$ is $\TD$-unsatisfiable.
Assuming our translation from $P$ to $P_\ev$ is faithful, 
this implies there is no solution for the conjecture $\exists f\,\forall \vec x\, P[f, \vec x]$.

Given solution and refutation soundness of the procedure, 
to show the procedure is solution complete, 
it suffices to show that the procedure terminates when the original conjecture has a solution under $R$.
Let $\lambda \vec x.\ t$ be such a solution, and let $d$ be the analogue of $t$ in the language of $\TD$.
Let $n$ be equal to the number of datatypes of the same type as $d$ that are at most the size of $d$, 
which we know is finite.
For $i = 1, 2, \ldots$, let $\I_i$ and $\mathcal J_i$ be the models found on the $i^{th}$ iteration of Steps~\ref{it:model-i} and~\ref{it:model-j} respectively.
%We know that $\con e^{\I_k} \neq \con e^{\I_j}$ for $k > j$.
%To show this, 
Assume the procedure runs at least $k$ iterations, and let $1 \leq j < k$.
Since $\mathcal J_j$ satisfies $\neg P_\ev[e^{\I_j}, \vec{\con k}]$,
all models of $\TD$ satisfy $\neg P_\ev[\con e^{\I_j}, \vec{\con k}^{\mathcal J_j}]$
since $\TD$ is satisfaction complete.
Since $\I_k$ satisfies $G$, it must also satisfy $P_\ev[\con e, \vec{\con k}^{\mathcal J_j}]$, and thus $\con e^{\I_k} \neq \con e^{\I_j}$.
Thus, each $e^{\I_1}, e^{\I_2}, \ldots$ is distinct, and
the procedure in Figure~\ref{fig:proc2} executes at most $n$ iterations of Step~\ref{it:model-i}.
Since the background theory $\TD$ is decidable, Steps~\ref{it:model-i} and~\ref{it:model-j} are terminating,
and thus the procedure is terminating when a solution exists.
\qed
\end{proof}

Note that by this proposition the procedure can diverge only 
if the input synthesis conjecture has no solution.
%We refer the reader to a longer version of this paper
%for a proof of Proposition~\ref{prop:sygus-sound-complete}~\cite{ReyEtAl-RR-2015}.
\begin{comment}
For a general idea, the proof of solution soundness is based 
on the observation that when the procedure terminates at Step~\ref{it:model-j},
$\Gamma$ has an unsatisfiable core with just one instance of $\lnot P[g, \vec x]$.
The procedure is refutation sound since when no model of $\Gamma$ 
in Step~\ref{it:model-i} satisfies $\con G$, 
we have that even an arbitrary $\con e$ cannot satisfy the current set 
of instances added to $\Gamma$ in Step~\ref{it:model-j}. 
Finally, the procedure is solution complete first of all because
Step~\ref{it:model-i} and~\ref{it:model-j} are effective thanks to 
the decidability of the background theory $\TD$.
Each execution of Step~\ref{it:model-i} is guaranteed to produce 
a new candidate since $\TD$ is also satisfaction complete.
Thus, in the worst case, the procedure amounts an enumeration
of all possible programs until a solution is found.
\end{comment}

%--------------------------------------------------------------------------------
\section{Single Invocation Techniques for Syntax-Guided Problems}
\label{sec:si-syntax-guided}
%--------------------------------------------------------------------------------

In this section, we considered the combined case of 
\emph{single-invocation synthesis conjectures with syntactic restrictions}.
Given a set $R$ of syntactic restrictions expressed 
by a datatype $\con S$ for programs and a datatype $\con C$ 
for Boolean expressions, consider the case where 
$(i)$ $\con S$ contains the constructor 
$\mathsf{if} : \con C \times \con S \times \con S \rightarrow \con S$ 
(with the expected meaning) and 
$(ii)$ the function to be synthesized is specified by a single-invocation 
property that can be expressed as a term of sort $\con C$.
This is the case for the conjecture from Example~\ref{ex:max-sygus} where
the property $P_\ev[g, \vec x]$ can be rephrased as:
\begin{eqnarray} \label{eqn:gen-sygus}
P_{\con C}[g, \vec x] & := & \ev( \con{and}( \con{leq}( \con x_1, g ), \con{and}( \con{leq}( \con x_2, g ), \con{or}( \con{eq}( g, \con x_1 ), \con{eq}( g, \con x_2 ) ) ) ), \vec x)
\end{eqnarray}
where again $g$ has type $\con S$, $\vec x = (x_1, x_2)$, and $x_1$ and $x_2$ 
have type $\Int$.
The procedure in Figure~\ref{fig:proc1} can be readily modified to apply 
to this formula, with $P_{\con C}[g, \vec{\con k}]$ and $g$ taking the role 
respectively of $Q[\vec{\con k}, y]$ and $y$ in that figure, since it generates
solutions meeting our syntactic requirements.
Running this modified procedure instead the one in Figure~\ref{fig:proc2}
has the advantage that only the outputs of a solution need 
to be synthesized, not conditions in $\mathsf{ite}$-terms.
%and moreover maintains the benefits of the theoretical properties as stated in Proposition~\ref{prop:sygus-sound-complete}.
However, in our experimental evaluation found that the overhead of using 
an embedding into datatypes % and evaluation operators 
for syntax-guided problems is significant with respect to the performance 
of the solver on problems with no syntactic restrictions.
For this reason, we advocate an approach for single-invocation synthesis 
conjectures with syntactic restrictions that runs the procedure 
from Figure~\ref{fig:proc1} as is, ignoring the syntactic restrictions $R$,
and subsequently reconstructs from its returned solution one satisfying 
the restrictions.
For that it is useful to assume that terms $t$ in $T$ can be effectively reduced 
to some ($T$-equivalent and unique) \emph{normal form}, 
which we denote by $\nmf t$.

Say the procedure from Figure~\ref{fig:proc1} returns 
a solution $\lambda \vec x.\, t$ for a function $f$.
To construct from that a solution that meets 
the syntactic restrictions specified by datatype $\con S$,
we run the iterative procedure described in Figure~\ref{fig:proc3}.
This procedure maintains an evolving set $A$ of triples of the form $( t, s, D )$,
where $D$ is a datatype, $t$ is a term in normal form, $s$ is a term
satisfying the restrictions specified by $D$.
The procedure incrementally makes calls to the subprocedure \rcon, 
which takes a normal form term $t$, a datatype $D$ and the set $A$ above, and
returns a pair $( s, U )$ where
$s$ is a term equivalent to $t$ in $T$, % representing a candidate solution,
and $U$ is a set of pairs $(s', D')$ where 
$s'$ is a subterm of $s$ that fails to satisfy the syntactic restriction expressed by datatype $D'$.
Overall, the procedure alternates between calling \rcon\ and 
adding triples to $A$ until $\rcon( t, D, A )$ returns a pair of the form 
$( s, \emptyset )$, in which case $s$ is a solution satisfying 
the syntactic restrictions specified by $\con S$.
%We demonstrate this process with an example.

\begin{figure}[t]
\begin{enumerate}
\item $A : = \emptyset$ ; $t' := \nmf{t}$
\item for $i = 1, 2, \ldots$
  \begin{enumerate}  
  \item $( s, U ) := \rcon( t', \con S, A )$; 
  \item \label{it:enum}
  if $U$ is empty, return $s$;
  otherwise, for each datatype $D_j$ occurring in $U$
  \begin{itemize}
  \item[] let $d_i$ be the $i^{th}$ term in a fair enumeration of the elements of $D_j$
  \item[] let $t_i$ be the analogue of $d_i$ in the background theory $T$
  \item[] add $( \nmf{t_i}, t_i, D_j )$ to $A$
  \end{itemize}
  \end{enumerate}
%\end{framed}
\end{enumerate}

\ \rcon$( t, D, A )$
\vspace{-2ex}
\begin{itemize}
 \item[\ ]
 if $(t, s, D) \in A$, return $( s, \emptyset )$;
 otherwise, do one of the following: \\
 \begin{tabular}{ll}
  (1) & choose a $f( t_1, \ldots, t_n )$ s.t. $\nmf{ f( t_1, \ldots, t_n ) }\ = t$ and
        $f$ has an analogue $c^{D_1 \ldots D_n D}$ in $D$ \\
      & let $( s_i, U_i ) = \rcon( \nmf{t_i}, D_i, A )$ for $i = 1, \ldots, n$ \\
      & return $( f( s_1, \ldots, s_n ), U_1 \cup \ldots \cup U_n )$ \\
  (2) & return $( t, \{ ( t, D ) \} )$
  \end{tabular}
\end{itemize}
\vspace{-2ex}
\caption{A procedure for finding a term equivalent to $t$ that meets the syntactic restrictions specified by datatype $\con S$.  
%The subprocedure \rcon returns a pair $( s, U )$, where $U$ is
%a set of pairs of the form $( t_i, S_i )$ representing that it remains to find a term equivalent to $t_i$ that meets syntactic restriction specified by datatype $S_i$.
%When the set of pairs $U$ is empty, the procedure is successful, and returns the term $s$.
%We write $\nmf{t_i}$ to denote the normal form of term $t_i$.
}
\label{fig:proc3}
\end{figure}

\begin{example}
Say we wish to construct a solution equivalent to $\lambda x_1\,x_2.\: x_1+(2*x_2)$ that meets restrictions specified by datatype $\con S$ from Example~\ref{ex:max-sygus}.
To do so, we let $A = \emptyset$, and call $\rcon(\nmf{(x_1+(2*x_2))}, \con S, A )$.
Since $A$ is empty and $+$ is the analogue of constructor $\con{plus}^{ \con S \con S \con S}$ of $\con S$, assuming $\nmf{(x_1+(2*x_2))}\ = x_1+(2*x_2)$,
we may choose to return a pair based on the result of calling $\rcon$ on $\nmf{x_1}$ and $\nmf{(2*x_2)}$.
Since $\con x_1^{\con S}$ is a constructor of $\con S$ and $\nmf{x_1}\ = x_1$,
$\rcon( x_1, \con S, A )$ returns $( x_1, \emptyset )$.
Since $\con S$ does not have a constructor for $*$, 
we must either choose a term $t$ such that $\nmf{t}\ = \nmf{(2*x_2)}$ where the topmost symbol of $t$ is the analogue of a constructor in $\con S$, 
or otherwise return the pair $( 2*x_2, \{ (2*x_2, \con S ) \} )$.
Suppose we do the latter, 
%and thus $\rcon( x_1+2*x_2, \con S, A )$ returns $( 2*x_2, \{ (2*x_2, \con S ) \} )$,
and thus $\rcon( x_1+(2*x_2), \con S, A )$ returns $( x_1+(2*x_2), \{ (2*x_2, \con S ) \} )$.
Since the second component of this pair is not empty,
we pick in Step~\ref{it:enum} the first element of $\con S$, $\con x_1$ say, and add $( x_1, x_1, \con S )$ to $A$.
%This indicates that there is a term in grammar $\con S$ (as witnessed by $\con x_1$) that is equivalent to $x_1$.
We then call $\rcon( \nmf{(x_1+(2*x_2))}, \con S, A )$ which by the same strategy above returns $( x_1+(2*x_2), \{ (2*x_2, \con S ) \} )$.
This process continues until we pick, the term $\con{plus}( \con{x_2}, \con{x_2} )$ say,
whose analogue is $x_2+x_2$.
Assuming $\nmf{(x_2+x_2)}\ = \nmf{(2*x_2)}$,
after adding the pair $( 2*x_2, x_2+x_2, \con S )$ to $A$,
$\rcon( \nmf{(x_1+(2*x_2))}, \con S, A )$ returns the pair $( x_1+(x_2+x_2), \emptyset )$,
indicating that $\lambda x_1\,x_2.\, x_1+(x_2+x_2)$ is equivalent to $\lambda x_1\,x_2.\, x_1+(2*x_2)$, and meets the restrictions specified by $\con S$.
\qed
\end{example}

This procedure depends upon the use of normal forms for terms.
It should be noted that, since the top symbol of $t$ is generally $\ite$,
this normalization includes both low-level rewriting of literals within $t$,
but also includes high-level rewriting techniques such as $\ite$ simplification, redundant subterm elimination and destructive equality resolution.
Also, notice that we are not assuming that $\nmf{t}\ = \nmf{s}$ if and only if $t$ is equivalent to $s$,
and thus normal forms only underapproximate an equivalence relation between terms.
%and thus checking whether the normal forms of two terms are identical underapproximates whether they are indeed equivalent.
Having a (more) consistent normal form for terms 
allows us to compute a (tighter) underapproximation, thus improving the performance of the reconstruction.
In this procedure, we use the same normal form for terms that is used by the individual decision procedures of \cvc.
This is unproblematic for theories such as linear arithmetic whose normal form for terms is a sorted list of monomials,
but it can be problematic for theories such as bitvectors.
As a consequence, we use several optimizations, 
omitted in the description of the procedure in Figure~\ref{fig:proc3}, 
to increase  the likelihood that the procedure terminates 
in a reasonable amount of time.
For instance, in our implementation
the return value of $\rcon$ is not recomputed every time $A$ is updated.
Instead, we maintain an evolving directed acyclic graph (dag),
whose nodes are pairs $( t, S )$ for term $t$ and datatype $S$ (the terms we have yet to reconstruct), 
and whose edges are the direct subchildren of that term.
Datatype terms are enumerated for all datatypes in this dag,
which is incrementally pruned as pairs are added to $A$ until it becomes empty.
Another optimization is that the procedure \rcon\ may choose to try simultaneously to reconstruct
\emph{multiple} terms of the form $f( t_1, \ldots, t_n )$ when matching a term $t$ to a syntactic specification $S$, 
reconstructing $t$ when any such term can be reconstructed.

Although the overhead of this procedure can be significant
when large subterms do not meet the syntactic restrictions, 
we found that in practice it quickly terminates successfully 
for a majority of the solutions we considered where reconstruction was possible,
as we discuss in the next section.
Furthermore, it makes our implementation more robust,
since it effectively treats in the same way different properties that are equal modulo normalization (which is parametric in the built-in theories we consider).

\section{Experimental Evaluation}

We implemented the techniques from the previous sections in the SMT solver \cvc~\cite{CVC4-CAV-11},
which has support for quantified formulas and a wide range of theories including arithmetic, bitvectors, and algebraic datatypes.
We evaluated our implementation on 243 benchmarks used in the SyGuS 2014 competition~\cite{AlurETAL2014SyGuSMarktoberdorf} that were publicly available on the StarExec execution service~\cite{StuST-IJCAR-14}.
The benchmarks are in a new format for specifying syntax-guided synthesis problems~\cite{DBLP:journals/corr/RaghothamanU14}.
We added parsing support to \cvc for most features of this format.
All SyGuS benchmarks considered contain synthesis conjectures whose background theory is either linear integer arithmetic or bitvectors.
We made some minor modifications to benchmarks to avoid naming conflicts,
and to explicitly define several bitvector operators that are not supported natively by \cvc.

We considered multiple configurations of \cvc corresponding to the techniques mentioned in this paper.
Configuration {\bf cvc4+sg} executes the syntax-guided procedure from Section~\ref{sec:syntax-guided},
even in cases where the synthesis conjecture is single-invocation.
Configuration {\bf cvc4+si-r} executes the procedure from Section~\ref{sec:refutation-based} 
on all benchmarks having conjectures that it can deduce are single-invocation.
In total, it discovered that 176 of the 243 benchmarks could be rewritten into a form that was single-invocation.
This configuration simply ignores any syntax restrictions 
on the expected solution.
Finally, configuration {\bf cvc4+si} uses the same procedure used 
by {\bf cvc4+si-r} but then attempts to reconstruct any found solution
as a term in required syntax, as described in Section~\ref{sec:si-syntax-guided}.

We ran all configurations on all benchmarks on the StarExec cluster.\footnote{
%The results can be found at 
%{\scriptsize \url{https://www.starexec.org/starexec/secure/details/job.jsp?id=6561}} 
%and {\scriptsize \url{https://www.starexec.org/starexec/secure/details/job.jsp?id=6440}}.
A detailed summary can be found at {\scriptsize \url{http://lara.epfl.ch/w/cvc4-synthesis}.}}
%http://lara.epfl.ch/~reynolds/CAV2015-synth}.}}
We provide comparative results here primarily against the enumerative CEGIS solver \esolver~\cite{Udupa2013}, the winner of the SyGuS 2014 competition.
In our tests, we found that \esolver performed significantly better than the other entrants of that competition.

\begin{figure}[t]
\centering
{\scriptsize
\begin{tabular}{|l|cr|cr|cr|cr|cr|cr|cr|cr|cr|}
\hline                                                                
  & \multicolumn{2}{c|}{{\bf array} (32)}     & \multicolumn{2}{c|}{{\bf bv} (7)}     & \multicolumn{2}{c|}{{\bf hd} (56)}      & \multicolumn{2}{c|}{{\bf icfp} (50)}      & \multicolumn{2}{c|}{{\bf int} (15)}     & \multicolumn{2}{c|}{{\bf let} (8)}      & \multicolumn{2}{c|}{{\bf multf} (8)}      & \multicolumn{2}{c|}{{\bf Total} (176)}    
\\                                                                
  & \#  & time  & \#  & time  & \#  & time  & \#  & time  & \#  & time  & \#  & time  & \#  & time  & \#  & time
\\                                                                
\hline                                                                
{\bf esolver} & 4 & 2250.7  & 2 & 71.2  & 50  & 878.5 & 0 & 0 & 5 & 1416.7  & 2 & 0.0 & 7 & 0.6 & 70  & 4617.7
\\                                                                
{\bf cvc4+sg} & 1 & 3.1 & 0 & 0 & 34  & 4308.9  & 1 & 0.5 & 3 & 1.7 & 2 & 0.5 & 7 & 628.3 & 48  & 4943
\\                                                                
{\bf cvc4+si-r} & (32)  & 1.2 & (6) & 4.7 & (56)  & 2.1 & (43)  & 3403.5  & (15)  & 0.6 & (8) & 1.0 & (8) & 0.2 & (168) & 3413.3
\\                                                                
{\bf cvc4+si} & 30  & 1449.5  & 5 & 0.1 & 52  & 2322.9  & 0 & 0 & 6 & 0.1 & 2 & 0.5 & 7 & 0.1 & 102 & 3773.2
\\                                                                
\hline                                                                
\end{tabular}
\\
}
\caption{Results for single-invocation synthesis conjectures, showing times 
(in seconds) and number of benchmarks solved
by each solver and configuration over 8 benchmark classes with a 3600s timeout. 
The number of benchmarks solved by configuration {\bf cvc4+si-r} are in parentheses
because its solutions do not necessarily satisfy the given syntactic restrictions.}
\label{fig:results-solved-si}
\end{figure}

\sparagraph{Benchmarks with single-invocation synthesis conjectures.}
The results for benchmarks with single-invocation properties are shown
in Figure~\ref{fig:results-solved-si}.
Configuration {\bf cvc4+si-r} found a solution (although not necessarily in 
the required language) very quickly for a majority of benchmarks.
It terminated successfully for 168 of 176 benchmarks, and in less than a second
for 159 of those.
%\footnote{It is important to note that this number is a coarse overestimation of the number of solutions that can be reconstructed.
%For instance, the {\bf icfp} contain benchmarks whose conjecture lists a set of input/output pairs that a function must satisify.
%However, the grammar for solutions does not directly include an $ite$ construct, meaning that generalization is required.
%For handling such benchmarks, the method in Figure~\ref{fig:proc1} additionally requires as a termination condition that the unsatisfiable core contains only one instance.
%We do not explore this restriction in this paper.
%}
Not all solutions found using this method met the syntactic restrictions.
Nevertheless, our methods for reconstructing these solutions into the required grammar, implemented in configuration {\bf cvc4+si}, succeeded in 102 cases,
or 61\% of the total.
This is 32 more benchmarks than the 70 solved by \esolver, 
the best known solver for these benchmarks so far.
In total, {\bf cvc4+si} solved 34 benchmarks that \esolver did not, while \esolver solved 2 that {\bf cvc4+si} did not.

The solutions returned by {\bf cvc4+si-r} were often large, having 
an order of 10K subterms for harder benchmarks.
However, after exhaustively applying simplification techniques during reconstruction with configuration {\bf cvc4+si},
we found that the size of those solutions is comparable to other solvers, 
and in some cases even smaller.
For instance, among the 68 benchmarks solved by both \esolver and {\bf cvc4+si},
the former produced a smaller solution in 15 cases and the latter in 9.
Only in 2 cases did {\bf cvc4+si} produce a solution that had 10 more subterms
than the solution produced by \esolver.
This indicates that in addition to having a high precision,
the techniques from Section~\ref{sec:si-syntax-guided} used 
for solution reconstruction are effective also at producing succinct solutions 
for this benchmark library.

Configuration {\bf cvc4+sg} does not take advantage of the fact
that a synthesis conjecture is single-invocation.
However, it was able to solve 48 of these benchmarks, including a small number
not solved by any other configuration, like one
from the {\bf icfp} class whose solution was a single argument function 
over bitvectors that shifted its input right by four bits.
In addition to being solution complete,
{\bf cvc4+sg} always produces solutions of minimal term size,
something not guaranteed by the other solvers and \cvc configurations.
Of the 47 benchmarks solved by both {\bf cvc4+sg} and \esolver,
the solution returned by {\bf cvc4+sg} was smaller than 
the one returned by \esolver in 6 cases, and had the same size in the others.
This provides an experimental confirmation that the fairness techniques 
for term size described in Section~\ref{sec:syntax-guided} ensure minimal size solutions.

\begin{figure}[t]
\centering
{\scriptsize
\begin{tabular}{|l|cc|cc|cc|cc|cc|}                                 
\hline                                        
  & \multicolumn{2}{c|}{{\bf int} (3)}      & \multicolumn{2}{c|}{{\bf invgu} (28)}     & \multicolumn{2}{c|}{{\bf invg} (28)}      & \multicolumn{2}{c|}{{\bf vctrl} (8)}      & \multicolumn{2}{c|}{{\bf Total} (67)}   
\\                                        
  & \#  & time  & \#  & time  & \#  & time  & \#  & time  & \#  & time
\\                                                                            
\hline                                        
{\bf esolver} & 3 & 1.6 & 25  & 86.3  & 25  & 85.6  & 5 & 29.5  & 58  & 203.0
\\                                        
{\bf cvc4+sg} & 3 & 1476.0  & 23  & 811.6 & 22  & 2283.2  & 5 & 2933.1  & 53  & 7503.9
\\                                        
\hline                                                                                                
\end{tabular}
\\
}
\caption{Results for synthesis conjectures that are not single-invocation, showing times (in seconds) and numbers of benchmarks solved
by \cvc and \esolver over 4 benchmark classes with a 3600s timeout. }
\label{fig:results-solved}
\end{figure}

\sparagraph{Benchmarks with non-single-invocation synthesis conjectures.}
Configuration {\bf cvc4+sg} is the only \cvc configuration that can
process benchmarks with synthesis conjectures that are not single-invocation.
The results for \esolver and {\bf cvc4+sg} on such benchmarks from
SyGuS 2014 are shown in Figure~\ref{fig:results-solved}.
Configuration {\bf cvc4+sg} solved 53 of them over a total of 67.
\esolver solved 58 and additionally reported that 6 had no solution.
In more detail, \esolver solved 7 benchmarks that {\bf cvc4+sg} did not, 
while {\bf cvc4+sg} solved 2 benchmarks (from the {\bf vctrl} class) 
that \esolver could not solve.
In terms of precision, {\bf cvc4+sg} is quite competitive with the state 
of the art on these benchmarks.
To give other points of comparison,
at the SyGuS 2014  competition~\cite{AlurETAL2014SyGuSMarktoberdorf}
the second best solver (the Stochastic solver) solved 40 of these benchmarks 
within a one hour limit and Sketch solved 23.

\sparagraph{Overall results.}
In total, over the entire SyGuS 2014 benchmark set, 155 benchmarks can be solved 
by a configuration of \cvc that, whenever possible, runs the methods 
for single-invocation properties described in Section~\ref{sec:refutation-based},
and otherwise runs the method described in Section~\ref{sec:syntax-guided}.
This number is 27 higher than the 128 benchmarks solved in total by \esolver.
Running both configuration {\bf cvc4+sg} and {\bf cvc4+si} in parallel\footnote{
\cvc has a \emph{portfolio} mode that allows it to run multiple configurations
at the same time.
} 
solves 156 benchmarks, 
indicating that \cvc is highly competitive with state-of-the-art tools 
for syntax guided synthesis.
\cvc's performance is noticeably better than \esolver 
on single-invocation properties, 
where our new quantifier instantiation techniques give it a distinct advantage.

%In particular, two benchmark classes ({\bf array} and {\bf int}) contain benchmarks that our approach
%can solve almost instantaneously, while 

\sparagraph{Competitive advantage on single-invocation properties in the presence of ite.}
We conclude by observing that for certain classes of benchmarks, 
configuration {\bf cvc4+si} scales significantly better than 
state-of-the-art synthesis tools.
Figure~\ref{fig:results-max} shows this in comparison with \esolver 
for the problem of synthesizing a function that computes the maximum of $n$
integer inputs.
As reported by Alur et al.~\cite{AlurETAL2014SyGuSMarktoberdorf},
no solver in the SyGuS 2014 competition was able to synthesize 
such a function for $n = 5$ within one hour.

For benchmarks from the {\bf array} class, whose solutions are loop-free 
programs that compute the first instance of an element in a sorted array,
the best reported solver for these in~\cite{AlurETAL2014SyGuSMarktoberdorf} was
Sketch, which solved a problem for an array of length 7 in approximately 30
minutes.\footnote{
These benchmarks, as contributed to the SyGuS benchmark set, use integer variables only; they were generated by expanding fixed-size arrays 
and contain no operations on arrays.
}
In contrast, {\bf cvc4+si} was able to reconstruct solutions for arrays of size 15 (the largest benchmark in the class)
in 0.3 seconds, and solved each of the benchmarks in the class but 8  
within 1 second.

\begin{figure}[t] %bh]
\centering \scriptsize
\begin{tabular}{|c|c|c|c|c|c|c|c|c|c|}
\hline                                                                
 $n$ &  2   & 3   & 4 & 5 & 6 & 7 & 8 & 9 & 10
\\
\hline
{\bf esolver} & 0.01  & \hfill 1377.10  & --  & --  & --  & --  & --  & --  & --
\\                                    
{\bf cvc4+si} & 0.01  & \hfill 0.02 & 0.03  & 0.05  & 0.1 & 0.3 & 1.6 & 8.9 & 81.5
\\                                                                                                    
\hline                                                                
\end{tabular}
\caption{Results for parametric benchmarks class encoding the maximum of $n$ integers.
The columns show the run time for \esolver and \cvc with a 3600s timeout.
\label{fig:results-max}}
\end{figure}

\section{Conclusion}

We have shown that SMT solvers, instead of just acting as subroutines 
for automated software synthesis tasks, can be instrumented 
to perform synthesis themselves. 
We have presented a few approaches for enabling SMT solvers 
to construct solutions for the broad class of syntax-guided synthesis problems
and discussed their implementation in \cvc.
This is, to the best of our knowledge, the first implementation of synthesis 
inside an SMT solver and it already shows considerable promise.
Using a novel quantifier instantiation technique and 
a solution enumeration technique for the theory of algebraic datatypes, 
our implementation is competitive with the state of the art represented 
by the systems that participated in the 2014 syntax-guided synthesis competition.
Moreover, for the important class of single-invocation problems 
when syntax restrictions permit the if-then-else operator, 
our implementation significantly outperforms those systems.

\sparagraph{Acknowledgments.}
We would like to thank Liana Hadarean for helpful discussions 
on the normal form used  in \cvc for bit vector terms.

%\end{document}

{
\bibliographystyle{abbrv}
\bibliography{main,managed}
}

\end{document}